\newtheorem{rem}{Remark}
\newtheorem{defi}{Definition}
\newtheorem{prop}{Proposition}
\newtheorem{thm}{Theorem}
\theoremstyle{proof}
\begin{document}
%
% paper title
% Titles are generally capitalized except for words such as a, an, and, as,
% at, but, by, for, in, nor, of, on, or, the, to and up, which are usually
% not capitalized unless they are the first or last word of the title.
% Linebreaks \\ can be used within to get better formatting as desired.
% Do not put math or special symbols in the title.
\title{A Two-Step Distribution System State Estimator with Grid Constraints and Mixed Measurements}

 \author{\IEEEauthorblockN{Miguel Picallo\IEEEauthorrefmark{1},
Adolfo Anta\IEEEauthorrefmark{2},
Bart de Schutter\IEEEauthorrefmark{1} and
Ara Panosyan\IEEEauthorrefmark{3}}
 \IEEEauthorblockA{\IEEEauthorrefmark{1}Delft Center for Systems and Control, Delft University of Technology \\
Delft, The Netherlands \\
m.picallocruz,b.deschutter@tudelft.nl}
\IEEEauthorblockA{\IEEEauthorrefmark{2}Austrian Institute of Technology \\
Vienna, Austria}
\IEEEauthorblockA{\IEEEauthorrefmark{3}General Electric Global Research \\
Munich, Germany}

\thanks{This project has received funding from the European Union's Horizon 2020 research and innovation programme under the Marie Skł{\l}odowska-Curie grant agreement No 675318 (INCITE).}}

% conference papers do not typically use \thanks and this command
% is locked out in conference mode. If really needed, such as for
% the acknowledgment of grants, issue a \IEEEoverridecommandlockouts
% after \documentclass

% for over three affiliations, or if they all won't fit within the width
% of the page, use this alternative format:
% 
%\author{\IEEEauthorblockN{Michael Shell\IEEEauthorrefmark{1},
%Homer Simpson\IEEEauthorrefmark{2},
%James Kirk\IEEEauthorrefmark{3}, 
%Montgomery Scott\IEEEauthorrefmark{3} and
%Eldon Tyrell\IEEEauthorrefmark{4}}
%\IEEEauthorblockA{\IEEEauthorrefmark{1}School of Electrical and Computer Engineering\\
%Georgia Institute of Technology,
%Atlanta, Georgia 30332--0250\\ Email: see http://www.michaelshell.org/contact.html}
%\IEEEauthorblockA{\IEEEauthorrefmark{2}Twentieth Century Fox, Springfield, USA\\
%Email: homer@thesimpsons.com}
%\IEEEauthorblockA{\IEEEauthorrefmark{3}Starfleet Academy, San Francisco, California 96678-2391\\
%Telephone: (800) 555--1212, Fax: (888) 555--1212}
%\IEEEauthorblockA{\IEEEauthorrefmark{4}Tyrell Inc., 123 Replicant Street, Los Angeles, California 90210--4321}}

% use for special paper notices
%\IEEEspecialpapernotice{(Invited Paper)}

% make the title area
\maketitle

% As a general rule, do not put math, special symbols or citations
% in the abstract
\begin{abstract}
In this paper we consider the problem of State Estimation (SE) in large-scale, 3-phase coupled, unbalanced distribution systems. More specifically, we address the problem of including mixed real-time measurements, synchronized and unsynchronized, from phasor measurement units and smart meters, into existing SE solutions. We propose a computationally efficient two-step method to update a prior solution using the measurements, while taking into account physical constraints caused by zero-injection buses. We test the method on a benchmark test feeder and show the effectiveness of the approach.
\end{abstract}

% no keywords

\begin{IEEEkeywords}
Distribution System State Estimation, Phasor Measurement Units, Power Flow Computation, Power System Modeling, Recursive Filtering
\end{IEEEkeywords}

% For peer review papers, you can put extra information on the cover
% page as needed:
% \ifCLASSOPTIONpeerreview
% \begin{center} \bfseries EDICS Category: 3-BBND \end{center}
% \fi
%
% For peerreview papers, this IEEEtran command inserts a page break and
% creates the second title. It will be ignored for other modes.
\IEEEpeerreviewmaketitle

\section{Introduction}
The operation of a power network requires accurate monitoring of its state: bus voltages, line currents, consumption, and generation. This is specially relevant in transmission networks where volatile and distributed generation and consumption cause bidirectional power flow and voltage drops. Normally, State Estimation (SE) consists in estimating the bus voltage phasors using the power flow equations derived from the structure of the network, represented by the admittance matrix. SE is typically performed by taking several measurements and solving a weighted least-squares problem using an iterative approach like Newton-Raphson \cite{abur2004power, huang2012state, monticelli2000electric, hayes2014state, holten1988comparison}. 

Typically distribution networks used to have a simple radial structure with a single source bus, the point of common coupling (PCC), connected to the grid and injecting power, and therefore SE was not as necessary as in transmission networks. This is changing given the increasing number of sources of distributed generation like PV-panels, electrical vehicles, batteries, etc. injecting power into the network and causing bidirectional power flow. Consequently, SE becomes necessary in distribution networks. One of the major difficulties for its implementation lies in the structural difference between transmission and distribution networks: in the latter, there are coupled phases, unbalanced loads, and lower $X/R$ ratios. Therefore, the power flow equations need to be solved for the 3 phases simultaneously, and approximations like the fast decoupled power flow \cite{stott1974fast,garcia1979fast} cannot be used. Another limitation is the lack of sufficient real-time measurement units in distribution networks: although Phasor Measurement Units (PMUs) and smart meters are being deployed, their high cost \cite{madani2011pmu} prevents installing the required number of sensors to make the system fully observable.

Some methods have been proposed to perform SE by exploiting the network structure and the nature of the measurements, such as branch-current-based SE \cite{baran1994state, kelley1995branch, lin2001highly}. Other methods use the few measurements available to improve the SE accuracy of previous methods based on load assumptions only \cite{schenato2014bayesian, zhou2006alternative,hu2011belief}. Yet, most of this work is limited to transmission networks, small-scale distribution networks, or single-phase networks. Some recent work extends it to more general networks, but assumes availability of measurements at every bus \cite{bolognani2014state}, which at the moment is not the case for most distribution networks. Additionally, when traditional methods \cite{abur2004power, holten1988comparison} deal with network constraints, such as zero-injection buses, become computationally expensive as the size of the network increases. This may be a limitation in large distribution systems when fast SE results may be required every few seconds. In the current paper, we will address theses issues.

In particular, our main contributions can be summarized as follows: First, we include network equality constraints more efficiently by a dimension reduction similar to \cite{guo2013efficient}. Then, we extend some recently proposed methods that split the problem \cite{schenato2014bayesian}, to large-scale, 3-phase coupled, unbalanced and constrained distribution systems, with synchronized and unsynchronized real-time measurements affected by both magnitude and angle noises. 

The rest of the paper is structured as follows: Section \ref{sec:grid} presents some background about power networks. Section \ref{sec:info} discusses the different types of measurements available for SE. Section \ref{sec:SEbasic} presents the standard methodology for SE. Section \ref{sec:methods} presents our contributions: it proposes the linear dimension reduction method and extends recent SE methodologies to split the problem. Section \ref{sec:sim} defines the test case simulation and shows its results. Finally, Section \ref{sec:conc} presents the conclusions and proposes topics for future work.

%\section{Nomenclature}
%\begin{tabular}{ll}
%$V$ & Vector of bus voltages\\
%$I$ & Vector of bus currents\\
%$S$ & Vector of bus loads\\ 
%$Y$ & Admittance matrix \\
%$V_0$ & Vector of voltages at 0 loads \\
%$S_\text{psd}$ & Pseudo-measurement estimations of loads \\ 
%$\varepsilon$ & Indices of buses with no load connected \\
%$z_\text{measL}$ & Real-time phasor measurements \\ 
%$z_\text{measNL}$ & Real-time magnitude measurements \\ 
%\end{tabular}

\section{Distribution System Model}\label{sec:grid}
% graph
A distribution system consists of buses, where power is injected or consumed, and branches, each connecting two buses. This system can be modeled as a graph $\mathcal{G}=(\mathcal{V},\mathcal{E},\mathcal{W})$ with nodes $\mathcal{V}=\{1,...,N_\text{bus}\}$ representing the buses, edges $\mathcal{E}=\{(v_i,v_j)\mid v_i,v_j \in \mathcal{V}\}$ representing the branches, and edge weights $\mathcal{W}=\{w_{i,j}\mid (v_i,v_j) \in \mathcal{E}, w_{i,j}\in \mathbb{C}\}$ representing the admittance of a branch, which is determined by the length and type of the line cables.

%3-phase
In 3-phase networks buses may have up to 3 phases, so that the voltage $V_i$ at bus $i$ lives in $\mathbb{C}^{\leq 3}$ (and the edge weights $w_{i,j}\in \mathbb{C}^{\leq 3 \times \leq 3}$). The state of the network is then typically represented by the vector bus voltages $V_\text{bus}=[V_\text{src}^T \; V^T]^T \in \mathbb{C}^{N+3}$, where $V_{\text{src}} \in \mathbb{C}^3$ denotes the known voltage at the source bus, and $V \in \mathbb{C}^N$ the voltages in the non-source buses, where $N$ depends on the number of buses and phases per bus.

% admittance matrix
%The impedance of the connection between buses $i$ and $j$ is a square matrix $Z_{i,j} \in \mathbb{C}^{P_{i,j} \times P_{i,j}}$ where $P_{i,j}$ is the number of phases in the connection $i\to j$. Since phases are coupled, $Z_{i,j}$ has off-diagonal terms different from $0$. These impedance matrices are used to compute the admittance matrix $Y \in \mathbb{C}^{(N+3)\times(N+3)}$, which defines the structure of the system. Assuming lines with no shunt admittance, the element of $Y$ corresponding to the connection between phase $l$ of bus $i$ and phase $k$ of $j$ can be expressed as:
%\begin{equation}
%(Y)_{i_l,j_k} = \left\lbrace \begin{array}{ll}
%-\frac{1}{(Z_{i,j})_{l,k}} &  \text{if } i \neq j \\[0.1cm]
%\sum_{m=1,\;m \neq i}^{N+3} \frac{1}{(Z_{i,m})_{l,k}} & \text{if } i=j	
%\end{array} \right.
%\end{equation}
%where $i_l$ is the index of phase $l$ of bus $i$, similarly for $j_k$, $(\cdot)_{i,j}$ denotes the $i,j$ element of a matrix, and $(Y)_{i_l,j_k}=0$ if no connection exists between these phases of these buses.

% power flow
Using the Laplacian matrix $Y \in \mathbb{C}^{(N+3) \times (N+3)}$ of the weighted graph $\mathcal{G}$, called admittance matrix \cite{abur2004power}, the power flow equations to compute the currents $I$ and the power loads $S$ are:
\begin{equation}\label{eq:PFeq}\arraycolsep=1pt
\begin{array}{c}
\left[\begin{array}{c} I_{\text{src}} \\ I \end{array}\right] = 
Y\left[\begin{array}{c} V_{\text{src}} \\ V \end{array}\right] =
\left[\begin{array}{cc} Y_\text{a} & Y_\text{b} \\ Y_\text{c} & Y_\text{d} \end{array}\right]
\left[\begin{array}{c} V_{\text{src}} \\ V \end{array}\right],  \; S = \text{diag}(\bar{I})V
\end{array}
\end{equation}
where $\bar{(\cdot)}$ denotes the complex conjugate, $\text{diag}(\cdot)$ represents the diagonal operator, converting a vector into a diagonal matrix. Separating $Y$ in blocks according to the indices of the source bus $V_{\text{src}}$, see \eqref{eq:PFeq}, the voltage $V$ for the non-source buses can be rewritten as:
\begin{equation}\label{eq:PFit}
\begin{array}{c}
V =  Y_\text{d}^{-1}I-Y_\text{d}^{-1}Y_\text{c}V_{\text{src}} \\[0.1cm]
V_0=V\mid_{I=0}=-Y_\text{d}^{-1}Y_\text{c}V_{\text{src}}
\end{array}
\end{equation}
with $V_0$ denoting the voltage without loads

\section{Measurements} \label{sec:info}
Several sources of information can be available to solve the SE problem:
\begin{enumerate}[leftmargin=*]
\item \textit{Pseudo-measurements}: They can be load estimations for every hour, based on predictions and known installed load capacity at every bus, and represented by $S_\text{psd}$. Since they are estimations rather than actual measurements, their noise is modeled with a relative large standard deviation (a typical value can be $\sigma_\text{psd} \approx 50\%$ \cite{schenato2014bayesian}):
\begin{equation*}
S = S_\text{psd} +S_\text{psd} \omega_\text{psd} = (P_\text{psd}+jQ_\text{psd})(1+\omega_\text{psd})
\end{equation*}
where $j$ is the imaginary unit, $\omega_\text{psd}\sim \mathcal{N}(0,\sigma_\text{psd}I_{\text{d},N})$, $I_{\text{d},N}$ is the identity matrix with size $N\times N$, and $P=\Re\{S\},Q=\Im\{S\}$ denote the real and imaginary parts of the power. The noise covariance of these measurements is then:
\begin{equation}\arraycolsep=1pt
\Sigma_{\text{psd}}=\mathbb{E}[\;\abs{S-S_\text{psd}}^2]=\sigma_\text{psd}^2 \text{diag}(\;\abs{S_\text{psd}}^2)
\end{equation}
where $\mathbb{E}[\cdot]$ denotes expectation, $\abs{\cdot}$ element-wise magnitudes of a complex vector, and $(\cdot)^2$ element-wise square.

\item \textit{Virtual measurements}: They are buses with no loads connected, zero injection buses. They can be modeled as physical constraints for the voltage states by defining the set of indices of zero-injection buses $\varepsilon=\{i,\cdots,j\}$:
\begin{equation}\label{eq:Scons}
(S)_{\varepsilon}=0, \; (I)_{\varepsilon}=0
\end{equation} 
where $(\cdot)_{\varepsilon}$ denotes the elements at indices in ${\varepsilon}$.

\item \textit{Real-time measurements}: They can be a range of voltages and currents measurements from PMUs, smart meters, or conventional remote terminal units. There are two possible kinds: GPS-synchronized sensors measuring magnitude and phase angle, and unsynchronized sensors providing only magnitude. We model the noises with a low standard deviation for magnitude and angle, $\sigma_\text{mag} \approx 1\%$ and $\sigma_\text{ang} \approx 0.01\text{ rad}$ respectively, according to the IEEE standard for PMUs \cite{martin2008exploring}.

Synchronized measurements can be expressed using a linear approximation (see appendix) with magnitude and angle noise caused by the measurements and by an imperfect synchronization. For a number $N_\text{measL}$ of measurements $z_\text{measL} \in \mathbb{C}^{N_\text{measL}}$ we have:
\begin{equation}\label{eq:Lmeas}
\begin{array}{c}
z_\text{measL} \approx  C_\text{measL}V + \text{diag}(C_\text{measL}V)(\omega_{\text{mag}} + j\omega_{\text{ang}})
\end{array}
\end{equation}
where $C_\text{measL}$ is the matrix mapping state voltages to measurements, $\omega_\text{mag} \sim \mathcal{N}(0,\sigma_\text{mag}I_{\text{d},N_\text{measL}})$, $\omega_\text{ang} \sim \mathcal{N}(0,\sigma_\text{ang}I_{\text{d},N_\text{measL}})$. Since $\sigma_\text{mag}$ and $\sigma_\text{ang}$ are similar \cite{martin2008exploring}, for simplicity we will use the same value for both: $\sigma_\text{meas}=\sigma_\text{ang}=\sigma_\text{mag}$. {\color{black} Then, for measurement $i$ at phase $l$ of bus $n$ we have:
\begin{equation}\label{eq:LmeasMap}\arraycolsep=1pt\begin{array}{l}
(C_\text{measL}V)_i = (C_\text{measL})_{i,\bullet}V= \\[0.1cm]
\left\lbrace \begin{array}{ll}
V_{n_l} & \mbox{for a voltage measurement} \\[0.0cm]
(Y)_{n_l,\bullet}V & \mbox{for a current measurement} \\[0.0cm]
(Y)_{n_l,m_l}(V_{n_l}-V_{m_l}) & \mbox{for a branch-current $i \to m$} \\[-0.0cm]
&  \mbox{measurement}
\end{array} \right.
\end{array}
\end{equation}
where $(\cdot)_{i,\bullet}$ denotes row $i$. 

Unsynchronized measurements have a nonlinear relation $C_\text{measNL}(V)$ with the voltages $V$. For $N_\text{measNL}$ measurements $z_\text{measNL} \in \mathbb{R}^{N_\text{measNL}}$ we have:
\begin{equation}\label{eq:NLmeas}
z_{\text{measNL}} =C_\text{measNL}(V)+\text{diag}(C_\text{measNL}(V))\omega_{\text{measNL}} 
\end{equation}
with $\omega_{\text{measNL}}\sim \mathcal{N}(0,\sigma_\text{meas}I_{\text{d},N_\text{measNL}})$. For measurement $i$ at phase $l$ of bus $i$ we have:
\begin{equation}\label{eq:NLmeasMap}\arraycolsep=1pt\begin{array}{l}
(C_\text{measNL}(V))_i =\\[0.1cm] 
\left\lbrace \begin{array}{ll}
\abs{V_{n_l}} & \mbox{for a voltage measurement} \\[0.0cm]
\abs{(Y)_{n_l,\bullet}V} & \mbox{for a current measurement} \\[0.0cm]
\abs{(Y)_{n_l,m_l}(V_{n_l}-V_{m_l})} & \mbox{for a branch-current $i\to m$} \\[-0.0cm]
&  \mbox{measurement}
\end{array} \right.
\end{array}
\end{equation}}
Since the measurement noises in \eqref{eq:Lmeas} and \eqref{eq:NLmeas} are small according to the PMUs standard \cite{martin2008exploring}, their covariance matrices can be approximated using the measurements instead of the actual value:
\begin{equation*}\arraycolsep=1pt
\begin{array}{rl}
\Sigma_\text{measL} & =  \mathbb{E}[\;\abs{z_{\text{measL}}-C_\text{measL}V}^2] 
\\[0.1cm]  
& = 2\sigma_\text{meas}^2 \text{diag}(\:\abs{C_\text{measL}V}^2)
\approx 2\sigma_\text{meas}^2 \text{diag}(\:\abs{z_\text{measL}}^2) \\[0.1cm]
\Sigma_\text{measNL} & = \mathbb{E}[\;\abs{z_{\text{measNL}}-C_\text{measNL}(V)}^2] 
\\[0.1cm] & = \sigma_\text{meas}^2 \text{diag}(C_\text{measNL}(V)^2) 
\approx \sigma_\text{meas}^2 \text{diag}(z_\text{measNL}^2)
\end{array}
\end{equation*}
\end{enumerate}

\section{Standard Methodology for State Estimation}\label{sec:SEbasic}
The standard methodology for SE computes the maximum likelihood estimation by solving a constrained nonlinear weighted least-squares problem with all measurements and estimations: $(S_\text{psd})_{\varepsilon^\text{c}},z_\text{measL},z_\text{measNL}$, their noise covariance, and the system constraints $(S_\text{psd})_{\varepsilon}=0$, where $\varepsilon^\text{c}$ denotes the complement of $\varepsilon$ within the set $\{1,\dots,N\}$. Typically, this problem is solved using the Newthon-Raphson method \cite{abur2004power}. However, since this requires the use of gradients and the power flow formulas in \eqref{eq:PFeq} are not holomorphic, i.e. complex differentiable, the problem is typically solved in real variables using a polar representation $V_\text{polar} = [\: \abs{V}^T \; \angle{V}^T]^T$:
\begin{equation}\label{eq:WLSpolar}
\min_{V_\text{polar}}\norm{z-h(V_\text{polar})}_{W^{-1}} \mbox{ s.t. } g(V_\text{polar}) = 0 
\end{equation} 
where $\norm{x}_{A}^2=x^TAx$ for some $A\succ 0$, $z$ is the vector of measurements in rectangular coordinates: 
\begin{equation*}
z=[
(P_{\text{psd}})_{\varepsilon^c}^T, 
(Q_{\text{psd}})_{\varepsilon^c}^T, 
\Re\{z_\text{measL}\}^T, 
\Im\{z_\text{measL}\}^T, 
z_\text{measNL}^T]^T
\end{equation*}
and $h(\cdot)$ is a nonlinear function mapping $V_\text{polar}$ to measurements. The function $g(\cdot)$ indicates the zero loads:
$g(V_\text{polar})=[(P_\text{psd})_{\varepsilon}^T$, $(Q_\text{psd})_{\varepsilon}^T ]^T$,
and the weight matrix $W^{-1}$ is the inverse of the measurement noises covariances in rectangular variables:
\begin{equation*}\arraycolsep=1pt
W=\text{diag}(\Sigma_\text{rect,psd},\Sigma_\text{rect,measL} , \Sigma_\text{measNL})
\end{equation*}
where here we use the function $\text{diag}(\cdot)$ to denote the operator converting several matrices into a single block-diagonal matrix, and the suffix $(\cdot)_\text{rect}$ stands for rectangular coordinates:
\begin{equation*}\arraycolsep=1pt \begin{array}{c}
\Sigma_\text{rect,measL} \approx \sigma_\text{meas}^2 \text{diag}(\abs{z_\text{measL}}^2,\abs{z_\text{measL}}^2))\\[0.1cm]
\Sigma_\text{rect,psd} = \sigma_\text{psd}^2 \text{diag}\left( \left[\begin{array}{cc}
(P_\text{psd})_{\varepsilon^\text{c}}^2 & (P_\text{psd}Q_\text{psd})_{\varepsilon^\text{c}} \\
(P_\text{psd}Q_\text{psd})_{\varepsilon^\text{c}} & (Q_\text{psd})_{\varepsilon^\text{c}}^2
\end{array}\right] \right)
\end{array}
\end{equation*}
\\
In the literature \cite{abur2004power, monticelli2000electric}, the constraints in \eqref{eq:WLSpolar} are typically included in two ways: 
\begin{enumerate}[leftmargin=*]
\item either using Lagrangian multipliers $\lambda$, which increases the size of the optimization variables to $[V_\text{polar}^T \; \lambda^T]^T \in \mathbb{R}^{2N+2\abs{\varepsilon}}$;
\item or including the equality constraints as measurements with a noise with a very small standard deviation, so that their corresponding weights become very large with respect to other measurements. However, this produces an ill-conditioned weights matrix, which may cause numerical problems. 
\end{enumerate}
In order to efficiently solve either 1) or 2), different methods like the orthogonal, hybrid, or Hachtel methods are typically used \cite{abur2004power, monticelli2000electric, holten1988comparison, korres2010robust, simoes1981robust}. However, these iterative approaches increase the computation time, which may compromise the use of these SE techniques for real-time operation of large distribution networks. Therefore, we propose a more computationally efficient method appropriate to the problem.

\section{Proposed Methods for State Estimation}\label{sec:methods}
In this section we propose new SE methods that significantly improve the standard methodology for \eqref{eq:WLSpolar}. First, we present a method that simplifies the problem and reduces its complexity, while obtaining the same solution; then, we propose another method that significantly reduces the computation cost while achieving approximately the same accuracy.

\subsection{Weighted least-squares in subspace}\label{subsec:SEWLS}
Here we propose an alternative to Lagrangian multipliers in order to include the constraints using a dimension reduction similar to \cite{guo2013efficient}. Therefore, we use a linear transformation that produces feasible solutions. {\color{black} For convenience,} we first define the space of feasible solutions of \eqref{eq:WLSpolar} as:
\begin{equation}\label{eq:Icons}
\begin{array}{l}
\{V\mid I_{\varepsilon}=(Y_\text{d})_{\varepsilon}V + (Y_\text{c})_{\varepsilon,\bullet}V_{\text{src}}=0\} \\[0.1cm]
= \{V\mid \exists x \in \mathbb{C}^{N-\mid \varepsilon \mid} \mbox{ s.t. } V=Fx+V_\text{p}\} \\[0.1cm]
\end{array}
\end{equation}
where $(Y_\text{d})_{\varepsilon,\bullet} \in \mathbb{C}^{\abs{\varepsilon}\times N}$ denotes the rows of $Y_\text{d}$ corresponding to indeces in $\varepsilon$; $V_\text{p}$ is a solution satisfying $I_{\varepsilon}=0$, and $F$ is a basis of the nullspace of $(Y_\text{d})_{\varepsilon,\bullet}$:
\begin{equation}\label{eq:F}
(Y_\text{d})_{\varepsilon,\bullet}F=0
\end{equation} 
Therefore, $V_0$ is chosen for $V_\text{p}$, and $F$ is built using an orthonormal basis of the kernel of $(Y_\text{d})_{\varepsilon,\bullet}$ e.g. by computing the singular value decomposition of $(Y_\text{d})_{\varepsilon,\bullet}=\mathcal{U}S\mathcal{V}^*$, where $(\cdot)^*$ denotes complex conjugate transpose, and taking the columns of $\mathcal{V}$ with zero singular value. Then, $F=\text{ker}((Y_\text{d})_{\varepsilon,\bullet})$ and $F^*F=I_\text{d}$, where $\text{ker}(\cdot)$ denotes the kernel subspace. If we consider the rectangular representation of voltages $V_\text{rect}=[\Re\{V\}^T \; \Im\{V\}^T]^T$ instead of the polar representation $V_\text{polar}$, the subspace equations can still be represented linearly in real variables: 
\begin{equation}\label{eq:Iconsrect}\arraycolsep=1pt
\begin{array}{rl}
\{V_\text{rect} \mid & \Re\{I_{\varepsilon}\}=0, \Im\{I_{\varepsilon}\}=0 \} \\[0.1cm] 
= \Bigg\{\hspace{-0.05cm}V_\text{rect} \Bigg| &
\left[\begin{array}{cr}
\Re\{(Y_\text{d})_{\varepsilon,\bullet}\} & -\Im\{(Y_\text{d})_{\varepsilon,\bullet}\} \\ 
\Im\{(Y_\text{d})_{\varepsilon,\bullet}\} & \Re\{(Y_\text{d})_{\varepsilon,\bullet}\}
\end{array}\right]
V_\text{rect} \\
& +\left[\begin{array}{c}\Re\{(Y_\text{c})_{\varepsilon,\bullet}V_{\text{src}}\} \\ \Im\{(Y_\text{c})_{\varepsilon,\bullet}V_{\text{src}}\} \end{array}\right]=0\Bigg\}
\\[0.4cm]
= \{V_\text{rect} \mid &  \exists \tilde{x} \in \mathbb{C}^{2N-2\mid \varepsilon \mid} \mbox{ s.t. } V_\text{rect}=\tilde{F}\tilde{x}+\tilde{V}_\text{p}\}
\end{array}
\end{equation}
where now
\begin{equation*}\arraycolsep=1pt
\begin{array}{l}
\tilde{V}_\text{p}= V_{\text{rect},0}= \left[\begin{array}{c}
\Re\{V_0\} \\ \Im\{V_0\}
\end{array}\right]\hspace{-0.1cm} \\[0.1cm]
\tilde{F} = \text{ker}\left( \left[\begin{array}{cr}
\Re\{(Y_\text{d})_{\varepsilon,\bullet}\} & -\Im\{(Y_\text{d})_{\varepsilon,\bullet}\} \\ \Im\{(Y_\text{d})_{\varepsilon,\bullet}\} & \Re\{(Y_\text{d})_{\varepsilon,\bullet}\}
\end{array}\right] \right)
\end{array}
\end{equation*}
Then problem \eqref{eq:WLSpolar} becomes:
\begin{equation}\label{eq:WLSrect}
\tilde{F}\left(\arg\min_{\tilde{x}} \norm{z-\tilde{h}(\tilde{F}\tilde{x}+\tilde{V}_\text{p})}_{W^{-1}}\right)+V_{\text{rect},0}
\end{equation} 
where $\tilde{h}(\cdot)$ is like $h(\cdot)$ but in rectangular coordinates, so that $\tilde{h}(V_\text{rect})=h(V_\text{polar})$. This again can be solved using the Newton-Raphson method with iterations:
\begin{equation}
\tilde{x}_{k+1} = \tilde{x}_k + \Delta \tilde{x}_k \\[0.1cm]
\end{equation} 
where
\begin{equation}\label{eq:WLSLinSys}
\begin{array}{c}
(H_k^T W^{-1}H_k)\Delta \tilde{x}_k = H_k^TW^{-1}(z-\tilde{h}(\tilde{F}\tilde{x}_k+\tilde{V}_\text{p})) \\[0.1cm]
H_k = \nabla_{\tilde{x}} \tilde{h}(\tilde{F}\tilde{x}+\tilde{V}_\text{p})\mid_{\tilde{x}_{k}} = \nabla_{V_\text{rect}}\tilde{h}(V_\text{rect}) \mid_{V_{\text{rect},k}}\tilde{F}
\end{array}
\end{equation} 
With the method in \eqref{eq:WLSrect}, we have eliminated the equality constraints by embedding the solution into a smaller subspace of feasible solutions. This simplifies the method and reduces its computational cost with respect to using Lagrangian multipliers or large weights \cite{abur2004power,holten1988comparison}, since we reduce the size of the optimization variables from $[V_\text{polar}^T \; \lambda^T]^T \in \mathbb{R}^{2N+2\abs{\varepsilon}}$ or $V_\text{polar} \in \mathbb{R}^{2N}$ respectively, to $\tilde{x} \in \mathbb{R}^{2N-2\abs{\varepsilon}}$. Additionally, the gain matrix $(H_k^T W^{-1}H_k)$ will not be ill-conditioned, since all weights have comparable magnitudes; and $F$,$\tilde{F}$ can be computed offline. Moreover, with method \eqref{eq:WLSrect} the constraints will be satisfied by construction, which in not guaranteed if using large weights. Oppositely to \cite{guo2013efficient}, we keep the rectangular representation, which keeps the transformation linear, and choose a different representation of the subspace using the orthonormal bases $F$, $\tilde{F}$, in order to compare the method with the later ones in \eqref{eq:WLSupdate}, \eqref{eq:solSEPFUp} and \eqref{eq:solSEPFUpNL}, when transforming the covariance matrices $\Sigma_\text{prior}$ and $\Sigma_\text{post}$.
%For many networks $\abs{\varepsilon}$ may be up to $\frac{N}{2}$, like in the test case described in Section \ref{sec:sim}.

\subsection{Power flow solution plus optimal linear update}\label{subsec:SEPFUp}
Using Newthon-Raphson for \eqref{eq:WLSpolar} and \eqref{eq:WLSrect} requires to solve in each iteration the linear system \eqref{eq:WLSLinSys} with approximately same size $N$ as the network. Consequently, this can become computationally expensive when applied to large distribution systems. Therefore, we propose an alternative method that splits the problem into two parts as in \cite{schenato2014bayesian}: first, we use only the constraints and the load estimates $S_\text{psd}$ available beforehand to solve the power flow problem offline and to obtain a prior solution defined as $V_ \text{prior}$; then, we update in real-time the solution using the measurements $z_\text{measL},z_\text{measNL}$ to compute a better estimate, represented by $V_\text{post}$. We also prove that two alternatives for the update, the minimum-variance and the maximum-likelihood estimator, are equal under certain conditions.

\subsubsection{Obtaining a prior solution}\label{subsec:PF}
\begin{enumerate}[leftmargin=*]
\item[a)] The prior estimate can be computed using the previous method in \eqref{eq:WLSrect} without real-time measurements: $z=[(P_{\text{psd}})_{\varepsilon^c}^T \: (Q_{\text{psd}})_{\varepsilon^c}^T]^T$ and $\tilde{h}(V_\text{rect})=[ P_{\varepsilon^\text{c}}(V_\text{rect})^T \: Q_{\varepsilon^\text{c}}(V_\text{rect})^T]^T$, where $P(V_\text{rect})$ denotes $P$ as a function of $V_\text{rect}$. This method will estimate real and imaginary parts $V_\text{rect,prior}$, with estimation error covariance:
\begin{equation}\label{eq:sigmaWLS}\arraycolsep=1pt 
\Sigma_{\text{rect,prior}}
\approx \tilde{F}(H_k^T W^{-1}H_k)^{-1}\tilde{F}^T
\end{equation}
%The covariance matrix in complex variables is: 
%\begin{equation}\begin{array}{rl}
%\Sigma_{\text{prior}} =& \Sigma_{\text{rect,prior},\Re\Re} +\Sigma_{\text{rect,prior},\Im\Im} \\[0.1cm]
%&-j\Sigma_{\text{rect,prior},\Re\Im} +j\Sigma_{\text{rect,prior},\Re\Im}^T 
%\end{array}
%\end{equation}
%where $\Sigma_{\text{rect,prior},\Re\Re},\Sigma_{\text{rect,prior},\Im\Im}$ indicate the covariance of the real and the imaginary parts respectively, and $\Sigma_{\text{rect,prior},\Re\Im}$ the covariance between the real and the imaginary parts, extracted from $\Sigma_{\text{rect,prior}}$.
\item[b)] An alternative approach using complex variables would be the method proposed in \cite{schenato2014bayesian}, after extending it to a 3-phase coupled, unbalanced system. Using the voltage under no loads $V_0$ as initial value in an iterative method inspired by \eqref{eq:PFit}, for the $k$-th iteration we have:
\begin{equation}\label{eq:linapPF}
V_{k+1} = Y_\text{d}^{-1} (\text{diag}(\bar{V}_k))^{-1} \bar{S} +V_0
\end{equation} 
The estimation error covariance can be approximated by the covariance after the first iteration, which corresponds to the linear approximation in \cite{bolognani2016existence}: 
\begin{equation}\label{eq:sigmaapprox}
\Sigma_\text{prior} \approx Y_\text{d}^{-1} (\text{diag}(\bar{V}_0))^{-1} \Sigma_{\text{psd}} (\text{diag}(V_0))^{-1} (Y_\text{d}^{-1})^*
\end{equation}
\begin{rem}
The estimated current at any iteration $k$ is 
\begin{equation}
I_{k}= (\text{diag}(\bar{V}_{k-1}))^{-1} \bar{S}
\end{equation}
If the loads satisfy $(S)_ \varepsilon=0$, then $(I_{k})_\varepsilon =0$. This means that there exists a vector $x\in \mathbb{C^{N-\abs{\varepsilon}}}$ s.t. $V_k=Fx+V_0$ and thus $V_k$ is feasible for all $k$ with respect to \eqref{eq:Icons}. %Note that we use $F \in \mathbb{C}^{N \times (N-\abs{\varepsilon})} $ instead of $\tilde{F}$ since $V_\text{k} \in \mathbb{C}^N$.
\end{rem}
\begin{rem}
Method b) can be extended to real variables to get $\Sigma_\mathrm{prior}$ in rectangular coordinates:
\begin{equation}\arraycolsep=1pt
V_{\mathrm{rect,}k+1} =
B_k \left[\begin{array}{c} \Re\{S_{\mathrm{psd}}\} \\ \Im\{S_{\mathrm{psd}}\}\end{array}\right]
+V_{\mathrm{rect,}0}
\end{equation}
with 
\begin{equation*}\arraycolsep=1pt
\begin{array}{c}
B_k = \left[\begin{array}{cr} \Re\{Y_\mathrm{d}^{-1}\} & -\Im\{Y_\mathrm{d}^{-1}\} \\ \Im\{Y_\mathrm{d}^{-1}\} & \Re\{Y_\mathrm{d}^{-1}\}\end{array}\right]
\Bigg(\mathrm{diag}\Bigg(\left[\begin{array}{c}\abs{V_k}^2 \\ \abs{V_k}^2\end{array}\right]\Bigg)\Bigg)^{-1} \\[0.4cm]
\cdot 
\left[\begin{array}{cr} \mathrm{diag}(\Re\{V_{k}\}) & \mathrm{diag}(\Im\{V_{k}\}) \\ \mathrm{diag}(\Im\{V_{k}\}) & -\mathrm{diag}(\Re\{V_{k}\}) \end{array}\right]
\end{array}
\end{equation*}
and estimation error covariance:
\begin{equation}\arraycolsep=1pt
\Sigma_{\mathrm{rect,prior}} = 
B_0\sigma_\mathrm{psd}^2
\mathrm{diag}\left(\left[
\begin{array}{c}\Re\{S_{\mathrm{psd}}\}^2 \\ \Im\{S_{\mathrm{psd}}\}^2 \end{array}
\right]\right)
B_0^T 
\end{equation}
\end{rem}
\end{enumerate}
The advantage of approach b) with respect to a) is that b) can be solved in complex variables. In both cases, since the load estimates $S_\text{psd}$ are known beforehand, this part of the problem can be computed offline, taking this iterative computational cost outside the real-time operation of the system. 

\subsubsection{Updating with only synchronized measurements}\label{subsec:updatePF}~\\
If there are not any unsynchronized measurements, the mapping from the state variables to the real-time measurements \eqref{eq:LmeasMap} is linear and thus holomorphic in $x\in\mathbb{C}^{N-\abs{\varepsilon}}$. Consequently, the update can be solved using complex variables to obtain a linear expression. Different alternative methods are possible; here we present two of them and later show their equivalence:
\begin{enumerate}[leftmargin=*]
\item[a)] If the prior solution $V_\text{prior}$ is feasible, then there exists a vector $x_\text{prior}$ s.t. $V_\text{prior}=Fx_\text{prior}+V_0$. It can be computed as $x_\text{prior}=F^*(V_\text{prior}-V_0)$, with estimation error covariance $\Sigma_{x_\text{prior}}=F^*\Sigma_{\text{prior}}F$, or $x_\text{rect,prior}=\tilde{F}^T(V_\text{rect,prior}-V_{\text{rect,}0})$ for real variables. Then the maximum likelihood update can be computed using weighted least-squares:
\begin{equation}\label{eq:WLSupdate}
\arraycolsep=1pt
V_\text{post}=V_0+F\arg\min_x\norm{\left[\begin{array}{c}
x - x_\text{prior} \\ C_\text{measL}(Fx+V_0)-z_\text{measL}
\end{array}\right]}_{W_\text{post}^{-1}}^2
\end{equation}
where $W_\text{post}^{-1}=(\text{diag}(F^*\Sigma_{\text{prior}}F , \Sigma_\text{measL}))^{-1}$ is the weight matrix. Taking derivatives at \eqref{eq:WLSupdate} we obtain the solution:
\begin{equation}\label{eq:WLSupdateSol}
\arraycolsep=1pt
\begin{array}{rl}
V_\text{post}=&  F((F^*\Sigma_{\text{prior}}F)^{-1}+F^*C_\text{measL}^*\Sigma_\text{measL}^{-1}C_\text{measL}F)^{-1} \\[0.1cm]
& \cdot((F^*\Sigma_{\text{prior}}F)^{-1}x_\text{prior}+F^*C_\text{measL}^*\Sigma_\text{measL}^{-1} \\[0.1cm]
& \cdot(z_\text{measL}-C_\text{measL}V_0)) +V_0
\end{array}
\end{equation}

\begin{rem}
Considering the first-order (and thus linear) approximation of the measurement function $\tilde{h}(\cdot)$, in \cite{zhou2006alternative} it is proven that solving the weighted least-squares problem in one step \eqref{eq:WLSrect} and in two steps (i.e. weighted least-squares for power flow plus update \eqref{eq:WLSupdate}), produce exactly the same solution. Consequently, we can expect a similar accuracy between the one-step and the two-steps methods when applied to the original nonlinear problem.
\end{rem}
\item[b)] The alternative approach proposed in \cite{schenato2014bayesian} and extended here, is to improve the solution using a minimum-variance linear update:
\begin{equation}\label{eq:solSEPFUp}
V_\text{post} = V_\text{prior}+K\left(z_\text{measL} - C_\text{measL}V_\text{prior} \right) \\
\end{equation}
Then the error covariance of estimation $V_\text{post}$ is: 
\begin{equation}\label{eq:sigmapost}
\arraycolsep=1pt
\begin{array}{rl}
\Sigma_\text{post} =&  \Sigma_\text{prior} + K(\Sigma_\text{measL}+C_\text{measL}\Sigma_\text{prior}C_\text{measL}^*)K^* \\[0.1cm]
&-KC_\text{measL}\Sigma_\text{prior}-\Sigma_\text{prior}C_\text{measL}^*K^* 
\end{array}
\end{equation}
and the optimal gain $K$ can be computed minimizing the expected error $\mathbb{E}[(V_\text{post}-V)^*(V_\text{post}-V) ]=\text{tr}(\Sigma_\text{post})$: 
\begin{equation}\label{eq:K}
\arraycolsep=1pt
K = \Sigma_\text{prior}C_\text{measL}^*(C_\text{measL}\Sigma_\text{prior}C_\text{measL}^*+\Sigma_\text{measL})^{-1}
\end{equation}
\end{enumerate}

\begin{prop}
If $V_\text{prior}$ is a feasible solution to \eqref{eq:Icons}, as the solutions in Section \ref{subsec:PF}, then the solution \eqref{eq:solSEPFUp} satisfies \eqref{eq:Icons}. 
\end{prop}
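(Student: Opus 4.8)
The plan is to show that applying the block $(Y_\text{d})_\varepsilon$ to the update \eqref{eq:solSEPFUp} annihilates the correction term, so that $V_\text{post}$ inherits the feasibility of $V_\text{prior}$. First I would record the elementary identity that makes ``feasible'' a linear condition relative to $V_0$: from \eqref{eq:V0} we have $Y_\text{d}V_0=-Y_\text{c}V_\text{source}$, and taking the rows indexed by $\varepsilon$ gives $(Y_\text{d})_\varepsilon V_0+(Y_\text{c})_\varepsilon V_\text{source}=0$. Hence for every $V$,
\begin{equation*}
I_\varepsilon(V)=(Y_\text{d})_\varepsilon V+(Y_\text{c})_\varepsilon V_\text{source}=(Y_\text{d})_\varepsilon (V-V_0),
\end{equation*}
and since $F=\text{ker}((Y_\text{d})_\varepsilon)$ by construction, $V$ satisfies \eqref{eq:Icons} if and only if $(Y_\text{d})_\varepsilon(V-V_0)=0$, i.e.\ $V-V_0\in\text{col}(F)$.

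Next I would subtract $V_0$ in \eqref{eq:solSEPFUp}, writing $V_\text{post}-V_0=(V_\text{prior}-V_0)+K(z_\text{measL}-C_\text{measL}V_\text{prior})$, and apply $(Y_\text{d})_\varepsilon$. The first term vanishes because $V_\text{prior}$ is feasible by hypothesis, so it remains to prove $(Y_\text{d})_\varepsilon K=0$. By the expression \eqref{eq:K} for the gain, $K=\Sigma_\text{prior}C_\text{measL}^*(C_\text{measL}\Sigma_\text{prior}C_\text{measL}^*+\Sigma_\text{measL})^{-1}$, so it suffices to show $(Y_\text{d})_\varepsilon\Sigma_\text{prior}=0$, equivalently $\text{col}(\Sigma_\text{prior})\subseteq\text{col}(F)$.

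This last point is the crux of the argument. The true state $V$ is itself feasible --- the buses indexed by $\varepsilon$ carry no load, hence $(I)_\varepsilon=0$ exactly --- and $V_\text{prior}$ is feasible by assumption; therefore every realization of the prior error $V_\text{prior}-V$ lies in the fixed subspace $\text{col}(F)$, and consequently $\Sigma_\text{prior}=\mathbb{E}[(V_\text{prior}-V)(V_\text{prior}-V)^*]$ has its range contained in $\text{col}(F)$, i.e.\ $(Y_\text{d})_\varepsilon\Sigma_\text{prior}=0$ by \eqref{eq:F}. For the concrete priors of Section~\ref{subsec:PF} this can also be checked directly: in case a) $\Sigma_\text{prior}$ comes from $\Sigma_\text{rect,prior}=\tilde F(\cdot)\tilde F^T$ and the real counterpart of \eqref{eq:F} gives zero; in case b) $\Sigma_\text{prior}=Y_\text{d}^{-1}(\text{diag}(\bar V_0))^{-1}\Sigma_{S_\text{pseudo}}(\text{diag}(V_0))^{-1}(Y_\text{d}^{-1})^*$ with $\Sigma_{S_\text{pseudo}}=\sigma_\text{pseudo}^2\text{diag}(\abs{S_\text{pseudo}}^2)$ vanishing on rows and columns in $\varepsilon$ since $(S_\text{pseudo})_\varepsilon=0$, and since $(Y_\text{d})_\varepsilon Y_\text{d}^{-1}$ equals the rows of the identity indexed by $\varepsilon$, the product collapses to $0$. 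Chaining the three steps yields $(Y_\text{d})_\varepsilon(V_\text{post}-V_0)=0$, hence $V_\text{post}$ satisfies \eqref{eq:Icons}.

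The only delicate point is exactly this: the whole argument rests on $(Y_\text{d})_\varepsilon\Sigma_\text{prior}=0$, so the proof must be explicit that the prior estimator returns states in $\text{col}(F)+V_0$ with probability one (equivalently, that the linearized covariance approximations actually used are sandwiched by $F$ or $\tilde F$); once that is in place everything else is routine linear algebra. As a byproduct, since $S=\text{diag}(\bar I)V$ forces $(S)_\varepsilon=0$ whenever $(I)_\varepsilon=0$, the updated estimate also respects the full load constraint \eqref{eq:Scons}, which is why \eqref{eq:Icons} is the right feasible set to track.
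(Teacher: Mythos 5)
Your proof is correct and follows essentially the same route as the paper's: both reduce the claim to showing $(Y_\text{d})_{\varepsilon}K=0$, which holds because $K$ begins with $\Sigma_\text{prior}$ whose range lies in $\text{col}(F)=\text{ker}((Y_\text{d})_{\varepsilon})$. You are in fact more careful than the paper on the one nontrivial step --- the paper simply asserts $\Sigma_\text{prior}=F\Sigma_{x_\text{prior}}F^*$, whereas you justify $(Y_\text{d})_{\varepsilon}\Sigma_\text{prior}=0$ both abstractly and by direct computation for the two concrete priors of Section~\ref{subsec:PF}.
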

\begin{proof}
For $V_\text{prior}$: $(Y_\text{d})_{\varepsilon,\bullet}V_\text{prior} + (Y_\text{c})_{\varepsilon,\bullet}V_{\text{src}}=0$. Also, since the first term in $K$ is $F$, because $K$ starts with $\Sigma_\text{prior}=F\Sigma_{x_\text{prior}}F^*$, we have $(Y_\text{d})_{\varepsilon,\bullet}K =0$ due to \eqref{eq:F}. Therefore, $V_\text{post}$ from \eqref{eq:solSEPFUp} satisfies \eqref{eq:Icons}.
%\begin{equation*}\arraycolsep=0pt
%\begin{array}{cl}
%(Y_\text{d})_{\varepsilon,\bullet}V_\text{post} + (Y_\text{c})_{\varepsilon,\bullet}V_{\text{src}}\\[0.1cm]
%=(Y_\text{d})_{\varepsilon,\bullet}V_\text{prior} + (Y_\text{c})_{\varepsilon,\bullet}V_{\text{src}} + (Y_\text{d})_{\varepsilon,\bullet}K(z_\text{measL}-C_\text{measL}V)\\[0.1cm]
%=0 + 0 =0 &\hspace{0cm}\qedhere
%\end{array} 
%\end{equation*}
\end{proof}

\begin{prop}\label{prop:equiv}
If $V_\text{prior}$ is a feasible solution to \eqref{eq:Icons}, then both alternatives \eqref{eq:WLSupdateSol} and \eqref{eq:solSEPFUp} are equal. 
\end{prop}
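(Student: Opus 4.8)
The plan is to push both update formulas into the reduced coordinate $x$ in which the constraint \eqref{eq:Icons} is automatically satisfied, and then to recognize the remaining identity as the textbook equivalence between the information form and the covariance (Kalman) form of a linear minimum-variance update, which is a consequence of the matrix inversion lemma.

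First I would exploit feasibility of $V_\text{prior}$. Since $V_\text{prior}$ and $V_0$ both satisfy $I_{\varepsilon}(\cdot)=0$, their difference lies in $\text{ker}((Y_\text{d})_{\varepsilon})=\text{span}(F)$; because $F^*F=I_\text{d}$, the matrix $FF^*$ is the orthogonal projector onto $\text{span}(F)$ and therefore fixes $V_\text{prior}-V_0$, giving $V_\text{prior}=Fx_\text{prior}+V_0$ with $x_\text{prior}=F^*(V_\text{prior}-V_0)$. Writing the prior error as $V_\text{prior}-V=F(x_\text{prior}-x)$ likewise gives $\Sigma_\text{prior}=F\Sigma_{x_\text{prior}}F^*$ (the fact already used in the proof of Proposition 1), so that $P:=F^*\Sigma_\text{prior}F=\Sigma_{x_\text{prior}}$. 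Introducing the shorthands $C:=C_\text{measL}F$, $R:=\Sigma_\text{measL}\succ0$ and $\tilde z:=z_\text{measL}-C_\text{measL}V_0$, and using $F^*C_\text{measL}^*\Sigma_\text{measL}^{-1}C_\text{measL}F=C^*R^{-1}C$, formula \eqref{eq:WLSupdateSol} becomes $V_\text{post}=F(P^{-1}+C^*R^{-1}C)^{-1}(P^{-1}x_\text{prior}+C^*R^{-1}\tilde z)+V_0$. On the other hand, substituting $\Sigma_\text{prior}=FPF^*$ into \eqref{eq:K} and cancelling $F^*F=I_\text{d}$ yields $K=F\hat K$ with $\hat K:=PC^*(CPC^*+R)^{-1}$, which is well defined since $CPC^*+R\succ0$; together with $C_\text{measL}V_\text{prior}=Cx_\text{prior}+C_\text{measL}V_0$ this turns \eqref{eq:solSEPFUp} into $V_\text{post}=F\left(x_\text{prior}+\hat K(\tilde z-Cx_\text{prior})\right)+V_0$.

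It then suffices to prove the identity $(P^{-1}+C^*R^{-1}C)^{-1}(P^{-1}x_\text{prior}+C^*R^{-1}\tilde z)=x_\text{prior}+\hat K(\tilde z-Cx_\text{prior})$. The key algebraic step is $(P^{-1}+C^*R^{-1}C)\hat K=C^*R^{-1}$, obtained by expanding the left-hand side into $C^*(CPC^*+R)^{-1}+C^*R^{-1}(CPC^*)(CPC^*+R)^{-1}$ and factoring out $C^*R^{-1}(R+CPC^*)$ on the right; from this one reads off both $(P^{-1}+C^*R^{-1}C)^{-1}C^*R^{-1}=\hat K$ and $(P^{-1}+C^*R^{-1}C)^{-1}=P-\hat KCP$. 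Substituting these into the left-hand side of the identity gives $x_\text{prior}-\hat KCx_\text{prior}+\hat K\tilde z$, which is exactly the right-hand side; multiplying by $F$ and adding $V_0$ then identifies \eqref{eq:WLSupdateSol} with \eqref{eq:solSEPFUp}. I expect the genuinely substantive point to be the reduction step, specifically the justification of $\Sigma_\text{prior}=FPF^*$ (equivalently, that the prior error has no component outside $\text{span}(F)$), since this is precisely where feasibility of $V_\text{prior}$ enters and without it the two expressions do not in general coincide; once that is granted, the rest is routine matrix-inversion-lemma bookkeeping plus a check that the inverses $P^{-1}$ in coordinate space and $CPC^*+R$ exist.
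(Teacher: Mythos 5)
Your proof is correct and follows essentially the same route as the paper's: both arguments rest on feasibility giving $V_\text{prior}=Fx_\text{prior}+V_0$ with $\Sigma_\text{prior}=F\Sigma_{x_\text{prior}}F^*$, followed by the matrix inversion lemma (Woodbury) to pass from the information form \eqref{eq:WLSupdateSol} to the gain form \eqref{eq:solSEPFUp} via the identity equivalent to $(I_\text{d}-KC_\text{measL})\Sigma_\text{prior}C_\text{measL}^*\Sigma_\text{measL}^{-1}=K$. Your reduction to the coordinates $x$ (writing $K=F\hat K$) is only a notational streamlining of the same computation, though it does make more explicit than the paper where feasibility is actually used.
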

\begin{proof}
We start from expression \eqref{eq:WLSupdateSol}, using Woodbury's identity \cite{woodbury1950inverting} on the inverse term and \eqref{eq:K} for $K$ we get:
\begin{equation}\label{eq:propequiv1}
\arraycolsep=1pt
\begin{array}{rl}
V_\text{post}=&  F((F^*\Sigma_{\text{prior}}F) - (F^*\Sigma_{\text{prior}}F)F^*C_\text{measL}^*\\[0.1cm] 
&\cdot(C_\text{measL}\Sigma_\text{prior}C_\text{measL}^* + \Sigma_\text{measL})^{-1}C_\text{measL}F(F^*\Sigma_{\text{prior}}F))\\[0.1cm]
& \cdot((F^*\Sigma_{\text{prior}}F)^{-1}x_\text{prior}+F^*C_\text{measL}^*\Sigma_\text{measL}^{-1} \\[0.1cm]
& \cdot(z_\text{measL}-C_\text{measL}V_0))+V_0 \\
=&  (F(F^*\Sigma_{\text{prior}}F)-KC_\text{measL}F(F^*\Sigma_{\text{prior}}F))\\[0.1cm]
& \cdot((F^*\Sigma_{\text{prior}}F)^{-1}x_\text{prior}+F^*C_\text{measL}^*\Sigma_\text{measL}^{-1} \\[0.1cm]
& \cdot(z_\text{measL}-C_\text{measL}V_0))+V_0\\[0.1cm]
=& V_0 + (I_\text{d}-KC_\text{measL})Fx_\text{prior}\\[0.1cm]
& +(I_\text{d}-KC_\text{measL})F(F^*\Sigma_{\text{prior}}F)F^*C_\text{measL}^*\Sigma_\text{measL}^{-1}\\[0.1cm]
 & \cdot (z_\text{measL}-C_\text{measL}V_0))
\end{array}
\end{equation}
%\begin{equation}
%\arraycolsep=1pt
%\begin{array}{rl}
%V_\text{post}=& V_0 + (I_\text{d}-KC_\text{measL})Fx_\text{prior}\\[0.1cm]
%& +(I_\text{d}-KC_\text{measL})F(F^*\Sigma_{\text{prior}}F)F^*C_\text{measL}^*\Sigma_\text{measL}^{-1}\\[0.1cm]
% & \cdot (z_\text{measL}-C_\text{measL}V_0))
%\end{array}
%\end{equation}
Doing some manipulation to the formula of $K$ in \eqref{eq:K} we get:
\begin{equation}
(I_\text{d}-KC_\text{measL})\Sigma_{\text{prior}}C_\text{measL}^*\Sigma_\text{measL}^{-1}=K
\end{equation}
So finally we can convert the expression in \eqref{eq:WLSupdateSol} to \eqref{eq:solSEPFUp}:
\begin{equation*}
\arraycolsep=1pt
\begin{array}{rll}
V_\text{post}=& V_0 +(I_\text{d}-KC_\text{measL})Fx_\text{prior}+K(z_\text{measL}-C_\text{measL}V_0)\\[0.1cm]
= &V_\text{prior}+K(z_\text{measL}-C_\text{measL}V_\text{prior})&\hspace{-0.5cm}\qedhere
\end{array}
\end{equation*}
\end{proof}

The maximum-likelihood estimator of the weighted least-squares solution coincides with the minimum-variance estimator of the linear update, since we are considering Gaussian distributions for the noise. Therefore, it is the same to use one method or the other, when considering only synchronized measurements.

\subsubsection{Updating with unsynchronized measurements}\label{subsec:updatePFNL}~\\
If there are nonlinear measurements, since the magnitude of a complex number as in \eqref{eq:NLmeasMap} is not a holomorphic function, the update step \ref{eq:solSEPFUp} needs to be solved using real variables:
\begin{equation}\label{eq:solSEPFUpNL}
V_\text{rect,post}=  
V_\text{rect,prior}+\tilde{K}
\left[ \begin{array}{c}
\Re\{z_\text{measL} - C_\text{measL}V(V_\text{rect,prior})\} \\
\Im\{z_\text{measL} - C_\text{measL}V(V_\text{rect,prior})\} \\
z_\text{measNL} - C_\text{measNL}(V(V_\text{rect,prior}))
\end{array}  \right] 
\end{equation}
Given that
\begin{equation*}\arraycolsep=1pt
\left[ \begin{array}{c}\Re\{C_\text{measL}V(V_\text{rect,prior})\} \\ \Im\{C_\text{measL}V(V_\text{rect,prior})\} \end{array}  \right] 
\hspace{-0.1cm}=\hspace{-0.1cm}
 \left[\begin{array}{cr}
\Re\{C_\text{measL}\} & -\Im\{C_\text{measL}\} \\
\Im\{C_\text{measL}\} & \Re\{C_\text{measL}\}
\end{array}\right] \hspace{-0.1cm} V_\text{rect,prior}
\end{equation*}
and defining
\begin{equation*}\arraycolsep=1pt
\begin{array}{c}
C_\text{measLNL} = \left[\begin{array}{c}
\left[\begin{array}{cc}
\Re\{C_\text{measL}\} & -\Im\{C_\text{measL}\} \\
\Im\{C_\text{measL}\} & \Re\{C_\text{measL}\}
\end{array}\right] \\
\nabla_{V_\text{rect,prior}} C_\text{measNL}(V(V_\text{rect})) \mid_{V_\text{rect,prior}}
\end{array}\right]  \\[0.5cm]
\Sigma_\text{measLNL} = \text{diag}(\Sigma_\text{rect,measL},\Sigma_\text{measNL})
\end{array}
\end{equation*}
the first-order approximation of the estimation error covariance of this nonlinear filter is: 
\begin{equation}\label{eq:sigmaUpNL}\begin{array}{c}
\Sigma_\text{rect,post} \approx \tilde{K}(\Sigma_\text{measLNL} +C_\text{measLNL}\Sigma_\text{rect,prior}C_\text{measLNL}^T)\tilde{K}^T \\[0.1cm]
+\Sigma_\text{rect,prior}  -\tilde{K}C_\text{measLNL}\Sigma_\text{rect,prior}-\Sigma_\text{rect,prior}C_\text{measLNL}^T\tilde{K}^T
\end{array}
\end{equation}
where now the optimal gain $\tilde{K}$ is: 
\begin{equation*}
\tilde{K} = \Sigma_\text{rect,prior}C_\text{measLNL}^T(C_\text{measLNL}\Sigma_\text{rect,prior}C_\text{measLNL}^T
+\Sigma_\text{measLNL})^{-1}
\end{equation*}
\begin{rem}	
As in Proposition \ref{prop:equiv}, it can be proven that the first-order approximation of the estimation error covariance \eqref{eq:sigmaUpNL} of \eqref{eq:solSEPFUpNL} is equal to the one approximating the estimation error covariance of \eqref{eq:WLSupdate} including nonlinear measurements, computed similarly as in \eqref{eq:sigmaWLS}.
\end{rem}

\section{Case Study}\label{sec:sim}
A simulation of 24 hours with 15 min intervals is run on a test case to compare the different methods for SE. Here we describe the settings of the test case and analyze its results. These algorithms are coded in Python and run on an Intel Core i7-5600U CPU at 2.60GHz with 16GB of RAM.
\subsection{Settings}\label{subsec:simDef}
\begin{itemize}[leftmargin=*]
\item System: We use the 123-bus test feeder \cite{kersting1991radial} available online \cite{testfeeder}, see Fig. \ref{fig:123bus}. This is a challenging example, since it is 3-phase coupled, unbalanced, and larger than other examples in the literature \cite{schenato2014bayesian,kelley1995branch}.
\item Measurements (see Fig. \ref{fig:123bus}): Voltage measurements (red circle for phasor, red square for magnitude only) are placed at buses $79,95,83$ and $300$, current measurements (blue dashed circle for phasor, blue dashed square for magnitude only) at buses $65$ and $48$, and branch current phasor measurements (blue dashed arrow) at branch $150$ (after the regulator) $\to 149$. The standard deviation used to simulate noise in the measurements is $\sigma_\text{meas}=0.01$ according to the standards in \cite{martin2008exploring}. This sensors locations are chosen since they correspond to nodes with big loads and/or towards the end of a feeder, or at a branch transporting a large current. {\color{black} A more extensive method to obtain optimal locations is not within the scope of this paper and will be addressed in future work.}
\item Load Profiles: They are built aggregating over 50 households profiles with maximum loads above $1\,$kW from the dataset available in the DiSC simulation framework \cite{pedersen2015disc}. The values of $S_\text{psd}$ are an average over all households multiplied by 50. This way the loads are similar to the base loads provided in the 123-bus test feeder, and the relative load standard deviation $\sigma_\text{psd}$ is approximately $50\%$. The load profiles can be seen in Fig. \ref{fig:figloads}. 
\item Nomenclature: $V_\text{prior}$ denotes the prior solution \eqref{eq:linapPF}; $V_{\text{WLS}_\text{Ha}}$ and $V_{\text{WLS}_\text{Or}}$ are the traditional Hachtel and Orthogonal solutions of \eqref{eq:WLSpolar} based on weighted least squares \cite{abur2004power}; $V_{\text{WLS}_\text{sub}}$ is the weighted least-squares solution using the subspace \eqref{eq:WLSrect}; and $V_\text{post}$ is the posterior estimate \eqref{eq:solSEPFUpNL}.
\item Accuracy metric: The root mean square error (RMSE) and the maximum absolute error (MaxAE) are determined for the voltages estimated by every method:
{\color{black}
\begin{equation*}\begin{array}{l}
\text{RMSE}_{t,\text{method}_m} = \sqrt{\frac{1}{N} \sum_{i=1}^N \abs{V_{\text{method}_m,t,i}-V_{t,i}}^2}  \\[0.1cm]
\text{maxAE}_{t,\text{method}_m} = \max_{i}\abs{V_{\text{method}_m,t,i}-V_{t,i}}
\end{array}
\end{equation*}
where $V_{t,i},V_{\text{method}_m,t,i}$ denote the value of the $i$-element, at time step $t$ of the simulation, of the actual voltage and the voltage estimated by method $m$ respectively.}
%The Mean Absolute Percentage Error (MAPE) is determined for the voltages estimated by every method:
%\begin{equation}
%\text{MAPE}_{V_{\text{method}_j}} = \frac{1}{T} \sum_{t=0}^T \frac{1}{N} \sum_{i=1}^N \frac{\mid V_{\text{method}_j,t,i}-V_{t,i} \mid}{\mid V_{t,i} \mid} 
%\end{equation}
%where $V_{t,i},V_{\text{method}_j,t,i}$ denote the $i$-element at time $t$ of the actual voltage and the voltage estimated by method $j$ respectively. This is a suitable metric to evaluate the relative deviation of the estimation, since voltage magnitudes are strictly positive under operating conditions.
\end{itemize}

\begin{figure}
\centering
\includegraphics[width=9cm,height=7.3cm]{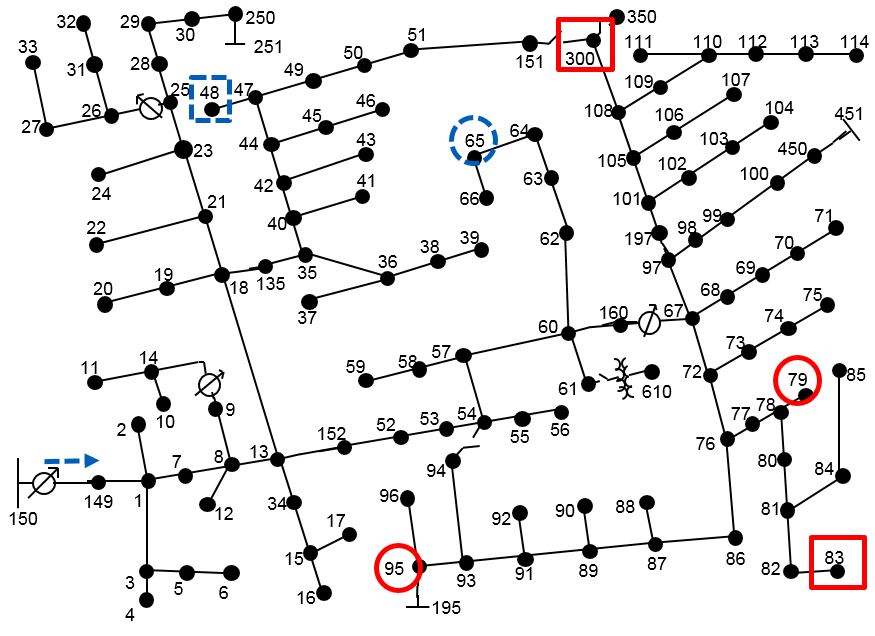}    % The printed column width is 8.4 cm.
\caption{123-bus test feeder with measurements location. The network image has been taken from \cite{testfeeder}.} 
\label{fig:123bus}
%\vspace{-0.6cm}
\end{figure}

\begin{figure}
\centering
\includegraphics[width=2.66in,height=2in]{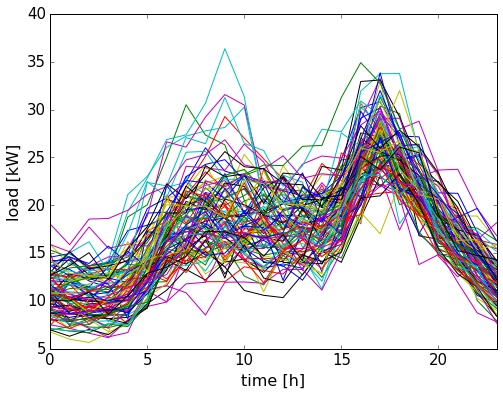}    % The printed column width is 8.4 cm.
\caption{Load profiles are built aggregating over 50 households profiles from the dataset in \cite{pedersen2015disc}.} 
\label{fig:figloads}
%\vspace{-0.6cm}
\end{figure}

\subsection{Results}\label{subsec:simRes}
Fig. \ref{fig:errorsmethoda} shows how a solution with real-time measurements like $V_\text{post}$ clearly outperforms the prior solution $V_\text{prior}$ using only load estimations. This is relevant in this case, because the $V_\text{prior}$ may produce an error of almost $0.05\,$pu and thus is unable to monitor network constraint violations, while the estimation $V_\text{post}$ produces errors of at most $0.01\,$pu. This shows that even when few measurements are available (in this case only 7 measurements in a 123-bus network), using them can dramatically increase the accuracy of the prediction. Moreover, Fig. \ref{fig:errorsmethodb} shows that the two-step solution $V_\text{post}$ performs statistically as well as all weighted least-squares solutions $V_{\text{WLS}_\text{Ha}}$, $V_{\text{WLS}_\text{Or}}$, $V_{\text{WLS}_\text{sub}}$, {\color{black} with a minimal decrease in accuracy, especially when compare to that of $V_\text{prior}$.} However, the methods using weighted least-squares are computationally much more expensive than the two step method $V_\text{post}$, see Fig. \ref{fig:exectime}. 

To summarize, the solutions including real-time measurements clearly improve the accuracy of the estimation. However, the two-step approach has a much smaller computational cost and is more scalable to bigger networks for real-time operation.

%\begin{figure}
%\captionsetup[subfigure]{aboveskip=-0.5pt,belowskip=-0.5pt}
%\begin{center}
%\begin{subfigure}{.143\textwidth}
%\includegraphics[width=\linewidth]{fig/errorspermethodNLa} 
%\caption{}\label{fig:errorsmethoda}
%\end{subfigure}%
%\begin{subfigure}{.26\textwidth}
%\includegraphics[width=\linewidth]{fig/errorspermethodNLb} 
%\caption{}\label{fig:errorsmethodb}
%\end{subfigure}
%\vspace{-0.2cm}\caption{Box-plots of MAPE errors for the different solutions. Red line indicate the median, red square the mean.} 
%\includegraphics[width=7.4cm,height=5cm]{fig/exectimeNL}    % The printed column width is 8.4 cm.
%\vspace{-0.2cm}\caption{Box-plots of execution times for the different solutions. Red line indicate the median, red square the mean.} 
%\label{fig:exectime}
%\end{center}
%\vspace{-0.7cm}
%\end{figure}

\begin{figure}[!t]
\centering
\subfloat[]{\includegraphics[width=1.45in,height=2in]{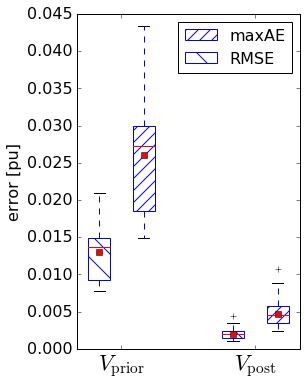}%
\label{fig:errorsmethoda}}
\hfil
\subfloat[]{\includegraphics[width=2in,height=2in]{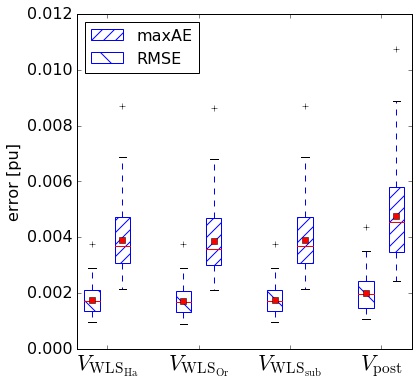}%
\label{fig:errorsmethodb}}
%\vspace{-0.1cm}
\caption{Box-plots of RMSE and MaxAE for the solutions. The red line indicates the median, the red square the mean.}
\includegraphics[width=2.66in,height=2in]{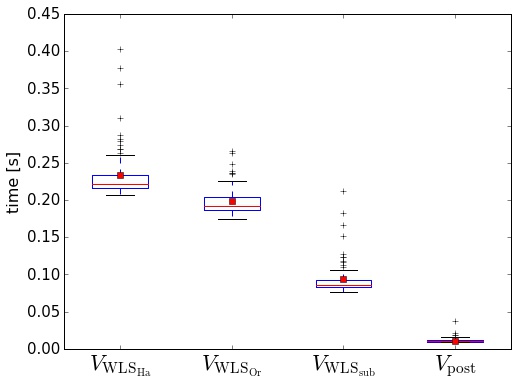}    % The printed column width is 8.4 cm.
%\vspace{-0.15cm}
\caption{Box-plots of execution times for the solutions. The red line indicates the median, the red square the mean.} 
\label{fig:exectime}
%\vspace{-0.5cm}
\end{figure}

\section{Conclusions}\label{sec:conc}
We have proposed a methodology for state estimation in large-scale, 3-phase coupled, unbalanced distribution systems, with mixed measurements: phasor-synchronized (linear) and magnitude-unsynchronized (nonlinear), zero load injection buses in the form of constraints, and different sources of noise: magnitude and angle. We have shown that the proposed method is as accurate as standard methods, but computationally more efficient in real-time operation. The computational cost reduction is achieved by two factors: first using a dimensionality reduction to the subspace of feasible solution, and second splitting the problem into a two-step offline and online problem. The iterative offline method estimates the prior solutions, taking most of the computational cost, while the online problem updates this solution in a single step, without requiring further iterations. 

Future work could include adding different sources of distributed generation and investigating how they affect the solutions; developing methodologies to compute the minimum number and optimal allocation of measurement units to guarantee a given accuracy level; and adding dynamic state equations to exploit the historical sensor information.

\section*{Appendix}\label{sec:appendix}
A PMU measurement of a complex number $u=\abs{u}e^{j\theta_u}$ with magnitude and angle noise can be expressed as:
\begin{equation*}\arraycolsep=1pt\begin{array}{rl}
\tilde{u} = & (\abs{u}+\omega_\text{mag})e^{j(\theta_u+\omega_\theta)} \\[0.1cm]
= & (\abs{u}+\omega_\text{mag})(\cos(\theta_u+\omega_\theta)+j\sin(\theta_u+\omega_\theta)) 
\end{array}
\end{equation*}
where $\omega_\text{mag}\sim \mathcal{N}(0,1\%\abs{u})$ (or $\tilde{\omega}_\text{mag}=\frac{\omega_\text{mag}}{\abs{u}}\sim \mathcal{N}(0,0.01)$), $\omega_\theta \sim \mathcal{N}(0,0.01\text{ rad})$, according to the standards in \cite{martin2008exploring}. Using trigonometric identities, considering that $\abs{\omega_\theta} \ll 1$ so that $\sin(\omega_\theta)\approx \omega_\theta$, $\cos(\omega_\theta)\approx 1$, and neglecting second-order noise terms, we obtain:
\begin{equation*}\arraycolsep=1pt
\begin{array}{rl}
\tilde{u} \approx & \abs{u}(\cos(\theta_u)+j\sin(\theta_u)) + \omega_\text{mag}(\cos(\theta_u)+j\sin(\theta_u))\\ & + \omega_\theta\abs{u}(-\sin(\theta_u)+j\cos(\theta_u)) \\
\approx & u + \tilde{\omega}_\text{mag}u +j \omega_\theta u = u + u\left(\tilde{\omega}_\text{mag}+j\omega_\theta\right)
\end{array}
\end{equation*}

\bibliographystyle{IEEEtran}
\bibliography{ifacconf}

% trigger a \newpage just before the given reference
% number - used to balance the columns on the last page
% adjust value as needed - may need to be readjusted if
% the document is modified later
%\IEEEtriggeratref{8}c = c[ins]
% The "triggered" command can be changed if desired:
%\IEEEtriggercmd{\enlargethispage{-5in}}

% references section

% can use a bibliography generated by BibTeX as a .bbl file
% BibTeX documentation can be easily obtained at:
% http://mirror.ctan.org/biblio/bibtex/contrib/doc/
% The IEEEtran BibTeX style support page is at:
% http://www.michaelshell.org/tex/ieeetran/bibtex/
%\bibliographystyle{IEEEtran}
% argument is your BibTeX string definitions and bibliography database(s)
%\bibliography{IEEEabrv,../bib/paper}
%
% <OR> manually copy in the resultant .bbl file
% set second argument of \begin to the number of references
% (used to reserve space for the reference number labels box)

% that's all folks
\end{document}